\documentclass[conference]{IEEEtran}
\IEEEoverridecommandlockouts

\usepackage{cite}
\usepackage{amsmath,amssymb,amsfonts,amsthm}
\usepackage{physics}
\usepackage{algorithmic}
\usepackage{graphicx}
\usepackage{textcomp}
\usepackage{xcolor}
\usepackage{dsfont}
\usepackage{comment}
\usepackage{bbold}
\usepackage[acronym]{glossaries}
\usepackage{tipa}
\usepackage{cuted} 
\usepackage{tikz}
\newcommand\copyrighttext{%
	\footnotesize \textcopyright 2026 IEEE. Personal use of this material is permitted.
	Permission from IEEE must be obtained for all other uses, in any current or future
	media, including reprinting/republishing this material for advertising or promotional
	purposes, creating new collective works, for resale or redistribution to servers or
	lists, or reuse of any copyrighted component of this work in other works.}
\newcommand\copyrightnotice{%
	\begin{tikzpicture}[remember picture,overlay]
		\node[anchor=south,yshift=10pt] at (current page.south) 
		{\fbox{\parbox{\dimexpr\textwidth-\fboxsep-\fboxrule\relax}{\copyrighttext}}};
	\end{tikzpicture}%
}

\def\BibTeX{{\rm B\kern-.05em{\sc i\kern-.025em b}\kern-.08em
    T\kern-.1667em\lower.7ex\hbox{E}\kern-.125emX}}

\theoremstyle{definition}
\newcounter{thm}
\newtheorem{theorem}[thm]{Theorem}

\newtheorem{remark}[thm]{Remark}

\newcommand{\qa}{\text{\textopeno}}

\newacronym{avc}{AVC}{Arbitrarily Varying Channel}
\newacronym{caavc}{CAAVC}{Correlation-Assisted Arbitrarily Varying Channel}
\newacronym{dmc}{DMC}{Discrete Memoryless Channel}
\newacronym{bpsk}{BPSK}{Binary Phase Shift Keying}
\newacronym{tmsv}{TMSV}{Two-Mode Squeezed Vacuum}
\newacronym{bsc}{BSC}{Binary Symmetric Channel}

\begin{document}

\title{Entanglement Enabled Data Transmission over an Arbitrarily Varying Channel\\
\thanks{This work was financed by the Bavarian Ministry for Economic Affairs (StMWi) via the 6GQT project, the Federal Ministry of Research, Technology and Space (BMFTR) of Germany via grants 16KISR026, 16KIS2604, 16KIS1598K, 16KISQ039, 16KISQ093 and 16KISQ077 as well as in the programme of ``Souver\"an. Digital. Vernetzt.''. Joint project 6G-life, project identification number: 16KISK002, by the Bavarian Ministry for Science and Arts (StMWK) via the NeQuS project and by the DFG via grant NO 1129/2-1.}
}

\author{
    \IEEEauthorblockN{Janis N\"otzel \emph{(Member, IEEE)}, Florian Seitz}
    \IEEEauthorblockA{
        Emmy-Noether Gruppe Theoretisches Quantensystemdesign\\
        Technische Universit\"at M\"unchen\\
        janis.noetzel@tum.de, flo.seitz@tum.de
    }
}

\maketitle
\copyrightnotice

\begin{abstract}
Shared randomness is the central ingredient for stabilizing symmetrizable communication systems against arbitrarily varying jammers. Given the presence of the jammer, however, the question arises how this precious resource could have been distributed. Several works discuss the use of external sources for this task. In this work, we show, based on the most standard optical communication model, how the sender and receiver can employ entangled two-mode squeezed states to counter the jamming attack of an energy-limited jammer during the distribution phase when both the sender and jammer are allowed to use binary phase shift keying and two-mode squeezed vacuum states.
\end{abstract}

\begin{IEEEkeywords}
Arbitrarily varying channel, bosonic channel, entanglement, jamming, resilience
\end{IEEEkeywords}

\section{Introduction}
\glspl{avc} were introduced in \cite{bbt} and developed into the standard information-theoretic model for communication systems that are subject to jamming. The question whether the capacity $C$ of an \gls{avc} is positive or not turned out to be nontrivial \cite{ericsonAVC,elimination,csiszarNarayanPositivity}: As soon as an amount of randomness shared between sender and receiver is available that scales polynomially with the blocklength, the capacity $\bar C$ of the system is described by an information-theoretic quantity similar to Shannon's capacity formula. To describe $C$, the notion of \emph{symmetrizability} of an \gls{avc} got established \cite{ericsonAVC,csiszarNarayanPositivity} (see also \cite{ahlswede-blinovsky}). A symmetrizable \gls{avc} has zero capacity, and the capacity formula for a non-symmetrizable one equals that of the same \gls{avc} with shared randomness between sender and receiver. 

The investigation of \glspl{avc} in signal processing was initiated in \cite{csiszarNarayan}, where the received signal was modeled as $y= s+x+z$, with $s$ being the jamming signal, $x$ the transmitted signal, $z$ the additive white Gaussian noise ($s,x,z\in\mathbb C$). The analog of the model \cite{csiszarNarayan} in quantum signal processing is given by coherent states $|s\rangle,|x\rangle$ interfering on a beamsplitter with transmittivity $\eta$, with the received signal $|\sqrt{\eta}s+\sqrt{1-\eta}x+z\rangle$ being disturbed by classical Gaussian noise $z$ \cite{noetzel-isit2024}.

\glspl{avc} display super-activation \cite{minglaiSuperActivation} and discontinuity of $C$ \cite{minglaiDiscontinuity,noetzelDiscontinuity}. The symmetrizability condition is not computable \cite{bocheUncomputable}, but approximate solutions can be efficiently computed \cite{detectingsymmetrizability}. The presence of just any correlated bipartite source that is \emph{independent} from the jammer is sufficient to guarantee operation at $\bar C$ \cite{ahlswede-cai-correlated,boche13}. Their capacity under different types of jamming has been studied for colored Gaussian noise \cite{peregAVC} and native quantum tasks \cite{quantumAVC,fullyQuantumAVC}, with similar observations. In some cases, even exact thresholds for the amount of shared randomness are known \cite{jaggiAVC}. 

These results establish that shared (common) randomness is a critical ingredient for communication under jamming, and already extremely small amounts of shared randomness suffice to achieve the random capacity $\bar C$. 


However, the following core question could not previously be asked in a meaningful way: \emph{Assuming a symmetrizable \gls{avc}, and no preshared correlated randomness, how could sender and receiver start to communicate, how could they e.g. establish a source of shared randomness?}

In this work, we address this highly contradictory situation for the first time. We provide evidence that a solution can be found by utilizing quantum resources, namely shared entangled states. Most importantly however, \emph{the sharing of these states may take place under the attack of the jammer}.

We let sender and jammer use two coherent states, $|\alpha\rangle$ and $|-\alpha\rangle$ (corresponding to messages $x=0,1$ and jamming signals $s=0,1$, with $\alpha \in \mathbb R^+$, as in \gls{bpsk}, or else a \gls{tmsv} state ($x=2$ and $s=2$) with the photon number of its one marginal being equal to $\alpha^2$. Both signals get mixed at a $50:50$ beam-splitter, which is a standard quantum-optical communication model \cite{guhaBeamsplitter-1,guhaBeamsplitter-2}. The receiver applies homodyne detection with sign binarization, so that the effective channel becomes 
\begin{align}\label{def:bpsk-avc}
    w(y|s,x) = \left\{\begin{array}{ll}
                    b_p(y|x),&x\ne2, s=x\\
                    b_{\tilde p}(y|x),&x\ne2, s=2\\
                    b_{\tilde p}(y|s),&x=2,s\ne2\\
                    \tfrac{1}{2},&\mathrm{else}
                \end{array}\right.
\end{align}
where $p=\tfrac{1}{2} (1-\mathrm{erf}(2\alpha))$ and $\tilde p=\tfrac{1}{2}(1-\mathrm{erf}(\alpha(1+\alpha^2)^{-1/2}))$ \cite[eq. (89) and (124)]{brask2022gaussian}, $p<\tilde p<1/2$ and $b_t$ is a \gls{bsc}.
Then $w$ is symmetrizable with the strategy $u(s|\hat x)=\delta(s,\hat x)$ for $\hat x=0,1,2$:
\begin{align}
    \textstyle\sum_su(s|\hat x)w_p(y|s,x)=\textstyle\sum_su(s| x)w_p(y|s,\hat x)\quad \forall x, \hat x,y.
\end{align}
Therefore, the channel cannot be used to communicate. The situation changes, however, if we allow the sender to perform a measurement on his half of the (entangled) \gls{tmsv} states transmitted by him and condition his later actions on the outcome.
Concretely, the sender may transmit the \gls{tmsv} state during $k/2$ out of $n$ transmissions to generate a distribution $q(u,v|\tilde s)$ shared between sender and receiver, where $\tilde s$ models the impact of the jammer. Then, \emph{all} effective channels
\begin{align}\label{def:augmented-avc}
    \tilde w_p(y|s,\tilde s,x)=\sum_{u,v}q(u,v|\tilde s)w_p(y\oplus v|s,x\oplus u)
\end{align}
are \glspl{bsc} with crossover probabilities smaller than $\frac{1}{2} -\delta$ for some $\delta>0$ that depends only on the allowed energy $\alpha^2$. 

Afterwards, they utilize $x\oplus u$ and $y\oplus v$ instead of $x$ and $y$ during the generation of common randomness for another $k/2$ rounds. Finally, $n-k$ rounds can be used to transmit at rates approaching the common randomness-assisted capacity of $w_p$. This leads to the surprising result that specific bipartite correlated signals, and in particular high-energy squeezed states, can increase the \emph{resilience} of certain conventional data transmission systems.

In what follows we first fix notation and provide the essential definitions that describe the behavior of \glspl{avc} and quantum optical systems. We then proceed to describe the quantum-optical correlation distribution procedure. Finally, we state and prove our main result.
\section{Preliminaries}
\subsection{Notation}
    We denote sets in boldface as $\mathbf X$. The set of probability distributions on a (finite) set is written as $\mathcal P(\mathbf X)$. For every $\mathbf X$, the uniform distribution on $\mathbf X$ is denoted as $\pi$. 

    Channels mapping a finite set $\mathbf X$ to another finite set $\mathbf Y$ are given by conditional probability distributions $w(y|x)$. Consequently, \glspl{avc} are given by conditional probability distributions $w(y|s,x)$ where $s$ is the input of the jammer. An important subclass of channels are the \glspl{bsc}, where $\mathbf X=\mathbf Y=\{0,1\}$ and the conditional probability distribution satisfies $w(y|x)=w(x|y)$. A \gls{bsc} is defined by its crossover probability $t$. We abbreviate its conditional probability distribution as $b_t$, with the convention $b_t(0|0)=1-t$. $\mathbb{P}(E)$ denotes the probability of the event $E$. The mutual information of two random variables $U,V$ is denoted $I(U,V)$. The convex hull of $p_1,\ldots,p_K\in\mathcal P(\mathbf X)$ is written $\mathrm{conv}(\{p_1,\ldots,p_K\})$.

\subsection{Arbitrarily Varying Channels}
An \gls{avc} is a family of channels $w(y|s,x)$ dependent on a channel state $s \in \mathbf S$ controlled by a jammer, where $\mathbf S$ is called the channel state set and $x, y$ are from the input and output alphabets $\mathbf X$ and $\mathbf Y$. 
A code of blocklength $n$ and rate $R$ consists of a distribution $q\in\mathcal P(\mathbf U\times\mathbf V)$ on finite sets $\mathbf U,\mathbf V$, for each $u\in\mathbf U$ a set of $M$ code-words $x^n_1(u),\ldots,x^n_M(u)\in\mathbf X^n$ and for each $v\in\mathbf V$ corresponding decoding sets $D_1(v),\ldots,D_M(v)\subset\mathbf Y^n$. For a state sequence $s^n\in\mathbf S^n$, the average probability of error is 
{\setlength{\abovedisplayskip}{4pt}
\setlength{\belowdisplayskip}{4pt}
\begin{align}
\scalebox{0.97}{$\displaystyle P_e(s^n) = 1-\tfrac{1}{M}\sum_{m=1}^{M}\sum_{u,v}q(u,v) w^{\otimes n}(D_m(v)|s^n,x^n_m(u)).$}
\end{align}}
We distinguish three types of distributions $q$ in this work: First, the case where $\mathbf U=\mathbf V=\{1\}$ and the code is called deterministic. Second, the case where $\mathbf U=\mathbf V$ and $q(u,v)=|\mathbf U|^{-1}\delta(u,v)$ and the code is called common randomness-assisted. Third, the case where $q$ is an i.i.d. distribution, and the code is called correlation-assisted. In all three cases, a rate $R$ is achievable if there exists a sequence of respective codes such that $\max_{s^n\in\mathbf S^n}P_e(s^n)\to 0$ as $n\to\infty$ and $\liminf_{n\to\infty}\tfrac{1}{n}\log M_n\geq R$. The respective capacity ($C$ for deterministic codes, $\bar C$ for common randomness-assisted codes and $\tilde C$ for correlation-assisted codes) is the supremum of all achievable rates.
An \gls{avc} is called symmetrizable if there exists a distribution $u(s|x)\in\mathcal P(\mathbf S)$ such that
\begin{align}
    \sum_s u(s|x)w(y|s,x') = \sum_s u(s|x')w(y|s,x)
\end{align}
for all $x,x'\in\mathbf X$ and $y\in\mathbf Y$. Intuitively, the jammer can simulate a second sender, making it impossible for the receiver to distinguish between codewords. If an \gls{avc} is symmetrizable, then $C=0$; otherwise $C=\bar C$ \cite{ahlswede70}. Further if $I(U,V)>0$ then $\bar C=\tilde C$ \cite{ahlswede-cai-correlated}.

\subsection{Quantum Optics}
In this section, we provide a short review of the most important concepts of quantum optics that are used in this work \cite{holevo-book}. For Gaussian states and operators, we mostly follow the conventions from \cite{brask2022gaussian}.

\subsubsection*{Continuous Variable Quantum Mechanics}
A photonic field mode is described as a harmonic oscillator with Hamiltonian
\begin{align}
\hat{H} = \hbar \omega \left( \hat a^{\dagger}\hat a + \frac{1}{2} \right).
\end{align}
Its eigenstates are the photon number states $\ket{n}$,
\begin{align}
\hat H\ket{n} = \hbar \omega \left( n+\frac{1}{2} \right)\ket{n}.
\end{align}
These states are also called Fock States and the Hilbert Space they span is called the Fock Space. The operators $\hat a^{\dagger}, \hat a$ are the 
creation and annihilation operators. We set $\hbar = 1$ from here on. The position and momentum quadratures are
\begin{align}
\hat x=\tfrac{1}{\sqrt2}(\hat a+\hat a^\dagger),\quad
\hat p=\tfrac{1}{i\sqrt2}(\hat a-\hat a^\dagger),\quad
[\hat x_j,\hat p_k]=i\delta_{jk}.
\end{align}
For $n$ modes, stack $\hat r=(\hat x_1,\hat p_1,\dots,\hat x_n,\hat p_n)^{\mathsf T}$ and
\begin{align}
\Omega=\bigoplus_{k=1}^n\begin{pmatrix}0&1\\-1&0\end{pmatrix}.
\end{align}

\subsubsection*{Gaussian States and Operations}
A state in a system of $n$ photonic field modes is Gaussian iff its Wigner function is multivariate Gaussian. It is fully specified by its first two moments, a mean vector $\bar{r} \in \mathbb{R}^{2n}$ and a covariance matrix $\sigma \in \mathbb{R}^{2n\times 2n}$:
\begin{align}
\bar r=\langle \hat r\rangle,\qquad
\sigma_{ij}=\tfrac12\langle\{\Delta\hat r_i,\Delta\hat r_j\}\rangle,
\end{align}
with uncertainty constraint $\sigma+i\Omega/2\ge0$.

Gaussian operations are those that preserve the Gaussianity of states. A general Gaussian channel acts as
\begin{align}
\sigma \to F \sigma F^{T} + N, \qquad \bar{r} \to F \bar{r} + \bar{d}.
\end{align}
A Gaussian unitary satisfies $N=0$ and $F\Omega F^{\mathsf T}=\Omega$ and we will omit $N$ if it is zero.

\subsubsection*{Coherent States}
Coherent States are used to describe classical laser light in Fock Space. They are the eigenstates of the annihilation operator, $\hat a\ket{\alpha} =\alpha \ket{\alpha}$, and their representation in the Fock Basis is
\begin{align}
\ket{\alpha} = e^{-\frac{\left| \alpha \right| ^{2}}{2}	}\sum_{n=0}^{\infty} \frac{\alpha^{n}}{\sqrt{ n! }} \ket{n}. 
\end{align}
The mean vector and covariance matrix of a single-mode coherent state are $\bar{r} = \sqrt{ 2 }(\mathrm{Re}(\alpha),\mathrm{Im}(\alpha))^T$, $\sigma = \frac{\mathbb 1}{2}$.
\subsubsection*{Thermal States}
Vacuum states that are ``broadened'' by thermal noise are called thermal vacuum states,
\begin{align}
    S_{E} = \frac{1}{E+1}\sum_{n=0}^{\infty} \left( \frac{E}{E+1} \right)^n |n\rangle\langle n|,
\end{align}
where $E$ is the average number of noise photons.

\subsubsection*{Two-Mode Squeezed Vacuum State}
A \gls{tmsv} state is a bipartite entangled state, defined by a complex parameter $\zeta = re^{i\theta}$, where $r$ is the squeeze magnitude and $\theta$ is the phase. The two-mode squeezing operation is defined by
\begin{align}
F= \begin{pmatrix}
\cosh(r)\mathbb{1}_{2} & \sinh(r) S_{\theta} \\
\sinh(r) S_{\theta} & \cosh(r) \mathbb{1}_{2}
\end{pmatrix}, \qquad \bar{d} = \mathbf{0},
\end{align}
where
\begin{align}
S_{\theta} = \begin{pmatrix}
\cos(\theta) & \sin(\theta) \\
\sin(\theta) & -\cos(\theta)
\end{pmatrix}.
\end{align}
We get a \gls{tmsv} state when we apply this operation to a two-mode vacuum state.

\subsubsection*{Beam Splitter}
The standard beam splitter with transmissivity $\eta$ mixes two bosonic field modes. It is defined as
\begin{align}
F = \begin{pmatrix}
\sqrt{ \eta }\, \mathbb{1}_{2} & \sqrt{1- \eta }\, \mathbb{1}_{2} \\
-\sqrt{ 1-\eta }\, \mathbb{1}_{2} & \sqrt{ \eta }\, \mathbb{1}_{2}
\end{pmatrix}, \qquad \bar{d} = 0.
\end{align}

\subsubsection*{Tracing Out}
Given a joint Gaussian state $\rho_{AB}$ in two systems $A$ and $B$, with
\begin{align}
\sigma_{AB} = \begin{pmatrix}
\sigma_{A}  & C \\
C^{T} & \sigma_{B}
\end{pmatrix}, \qquad \bar{r}_{AB} = \begin{pmatrix}
\bar{r}_{A} \\
\bar{r}_{B}
\end{pmatrix},
\end{align}
its reduced states $\rho_{A}$ and $\rho_{B}$ are Gaussian, with covariance matrices $\sigma_{A}$ and $\sigma_{B}$ and displacement vectors $\bar{r}_{A}$ and $\bar{r}_{B}$.

\section{Correlation Distribution Procedure}\label{sec:procedure}
We now show that the output of the \gls{tmsv} based correlation procedure is always correlated if the jammer is restricted to Gaussian states.
To make the notation more compact, we define $\eta' := 1-\eta$, $s_r := \sinh(2r)$ and $c_r := \cosh(2r)$. We start with a \gls{tmsv} state with $\theta = 0$
{\setlength{\arraycolsep}{4pt} \renewcommand{\arraystretch}{0.8}
\begin{align}
\sigma_{\mathrm{TMSV}} = \frac{1}{2}  \begin{pmatrix}
c_r & 0 & s_r& 0 \\
0 & c_r & 0 & -s_r\\
s_r &0 & c_r & 0 \\
0& -s_r& 0 & c_r
\end{pmatrix},\bar{r}_{\mathrm{TMSV}} = \begin{pmatrix}
0 \\ 0 \\ 0 \\ 0
\end{pmatrix},
\end{align}}
and a general Gaussian state $\tau$
\begin{align}
\sigma_{\tau} = \begin{pmatrix}
A &  C \\
C &  B
\end{pmatrix}, \qquad \bar{r}_{\tau}=\begin{pmatrix}
a \\
b
\end{pmatrix}.
\end{align}
Then
{\setlength{\arraycolsep}{3pt} \renewcommand{\arraystretch}{0.8}
\begin{align}
\sigma_{OUT} = \frac{1}{2} \begin{pmatrix}
c_{r} & 0 & \sqrt{ \eta }s_{r} & 0 \\
0 & c_{r} &  0 & -\sqrt{ \eta }s_{r}\\
\sqrt{ \eta }s_{r} & 0 & 2(\eta')A + \eta c_{r} & 2(\eta')C \\
0 & -\sqrt{ \eta }s_{r} & 2(\eta')C & 2(\eta')B + \eta c_{r}
\end{pmatrix}
\end{align}}
\begin{align}
    \bar{r}_{OUT} = \left(0,0,\sqrt{ \eta' }\,a,\sqrt{ \eta' }\,b \right)^T
\end{align}
Therefore, after homodyne measurement, we get two random variables $X$ and $Y$ with the joint probability density $P_{\hat{x}\hat{x}}^{AB}$, which is a bivariate Gaussian with
\begin{align}
\Sigma_{\hat x \hat x}= \frac{1}{2} \begin{pmatrix}
c_r & \sqrt{ \eta }s_r \\
\sqrt{ \eta }s_r  & 2(\eta')A + \eta c_r
\end{pmatrix}, \mu_{\hat x \hat x} = \begin{pmatrix}
0 \\
\sqrt{ \eta' }\, a
\end{pmatrix}.
\end{align}
The linear correlation coefficient of $X$ and $Y$ is
\begin{align}
\rho = \frac{\sqrt{ \eta }s_r}{\sqrt{ c_r(2(\eta')A + \eta c_r) }}.
\end{align}
Its sign does not depend on any parameter of $\tau$, it is always positive as long as $r>0$. We now perform sign binarization by defining random variables 
\begin{align}\label{def:binary-correlations}
    U = \mathrm{sgn}(X),\qquad V = \mathrm{sgn}(Y).
\end{align}
The outcomes correspond to the four quadrants of $\mathbb{R}^{2}$,
\begin{align}
q_{00} &= \int_{x_{A}<0}  \int_{x_{B}<0}   P_{\hat{x}\hat{x}}^{AB}(x_{A}, x_{B}) \, dx_{A} dx_{B} \end{align}
and $q_{01}$, $q_{10}$ and $q_{11}$ are defined analogously.
We bring the distribution $P_{\hat{x}\hat{x}}^{AB}$ into standard-normal form by setting
\begin{align}
\mathbf{z} = \begin{pmatrix}
z_{1} \\
z_{2}
\end{pmatrix} = \begin{pmatrix}
x_{A} \\
x_{B} - \sqrt{ \eta' }\,a
\end{pmatrix}
\end{align}
\begin{align}
\sigma_{A}^{2} = \tfrac{c_r}{2}, \qquad \sigma_{B}^{2} = \eta'A+\tfrac{\eta c_r}{2}
\end{align}
Then 
$\begin{pmatrix}
Z_{1}\\Z_{2}\end{pmatrix}=\begin{pmatrix} \frac{z_{1}}{\sigma_{A}} \\ \frac{z_{2}}{\sigma_{B}}.
\end{pmatrix}$ 
is a standard bivariate normal vector with correlation coefficient $\rho$ and the region corresponding to $q_{11}$ is $Z_{1}>0, Z_{2}> -b, b=\frac{\sqrt{ \eta' }\,a}{\sigma_{B}}$. In other words,
\begin{align}\label{eqn:structure-of-q-1}
q_{00} = & \,  \Phi_{2}\left( 0, -b;\rho \right), \\
q_{01}= & \, \tfrac{1}{2} - \Phi_{2}\left( 0, -b;\rho \right), \\
q_{10}= & \, \Phi(-b) - \Phi_{2}\left( 0, -b;\rho \right),  \\
q_{11} = & \, \tfrac{1}{2} - \Phi(-b) + \Phi_{2}\left( 0, -b;\rho \right),\label{eqn:structure-of-q-4}
\end{align}
where $\Phi_{2}$ is the cumulative distribution function (CDF) of the standard bivariate normal distribution, $\varphi_2$ is its density, and $\Phi$ is the CDF of the standard 1D Gaussian.

We want to use the binarized random variables $U,V$ for the correlated code, so it remains to show that $\rho>0$ implies $\rho_{\mathrm{bin}}>0$, where $\rho_{\mathrm{bin}}$ is the linear correlation coefficient of the binary variables.

For the case $a=0$ we can give an explicit expression in the form of the arcsine law \cite{Vleck1966}, giving us
\begin{align}
\rho_{\mathrm{bin}} = \tfrac{2}{\pi} \arcsin \rho,
\end{align}
which is positive whenever $\rho$ is positive. For $a \neq 0$ we do not give an exact formula, however in the Appendix we show that $\rho >0 \Longrightarrow \rho_{\mathrm{bin}}>0$. This directly implies that the mutual information $I(U, V)$ is positive for all Gaussian jammer states, including the specific case of the coherent and thermal states that the jammer can use in our jamming model \eqref{def:bpsk-avc}.

For all cases where the jammer has only a finite number of choices, this implies a uniform lower bound $I(U,V)>\delta'$ for a fixed $\delta'>0$ and all jammer choices.

\section{Main Result and Proof}
    Based on the correlation distribution procedure described in Section \ref{sec:procedure}, we obtain the following result.
    \begin{theorem}\label{thm:main}
        For all values $\alpha\in\mathbb R^+$, if the symbol $x=2$ creates a thermal state $S_{\alpha^2}$, then $C=0$. If $x=2$ creates a \gls{tmsv} state with marginals $S_{\alpha^2}$ and the sender is allowed to use homodyne detection on his half, then $C=\bar C$. In addition, $\bar C>0$ for all $\alpha \in \mathbb R^+$.
    \end{theorem}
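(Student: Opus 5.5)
The theorem has three claims, and I would handle them in order.

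\emph{First claim: $C=0$ in the thermal case.} If $x=2$ produces the thermal state $S_{\alpha^2}$, its reduced statistics at the receiver coincide with those of one marginal of the \gls{tmsv} state. The effective channel is therefore exactly \eqref{def:bpsk-avc}. I would verify the symmetrizability identity for $u(s|\hat x)=\delta(s,\hat x)$ directly from \eqref{def:bpsk-avc}. Inserting $u$, the condition reduces to $w(y|\hat x,x)=w(y|x,\hat x)$ for all $x,\hat x,y$, i.e.\ symmetry of $w$ under exchanging sender and jammer inputs. Going through the case distinction in \eqref{def:bpsk-avc}, this holds:
\begin{itemize}
\item for $x=\hat x\neq 2$ both sides equal $b_p$;
\item for exactly one of the two inputs equal to $2$ both sides equal $b_{\tilde p}(y|\cdot)$ of the non-$2$ input;
\item for $\{x,\hat x\}=\{0,1\}$, and for $x=\hat x=2$, both sides equal $\tfrac12$.
\end{itemize}
The physical reason is that the $50:50$ beam splitter treats both ports symmetrically. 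Since a thermal state carries no correlation the sender can exploit, no shared randomness arises. By the result of \cite{ahlswede70} quoted in the preliminaries, symmetrizability gives $C=0$.

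\emph{Second claim: $C=\bar C$ in the \gls{tmsv} case.} I would use a three-phase protocol with $k=o(n)$ but $k\to\infty$.
\begin{enumerate}
\item In the first $k/2$ uses the sender emits $x=2$ and homodynes his half; the receiver homodynes and binarizes. By Section~\ref{sec:procedure}, for every Gaussian jammer state, and in particular for the three finitely many jammer choices $s\in\{0,1,2\}$ (coherent states $|\pm\alpha\rangle$ or a thermal marginal), $\rho>0$. Hence $\rho_{\mathrm{bin}}>0$ and $I(U,V)\ge\delta'>0$ uniformly, so this phase yields correlated pairs $(u_i,v_i)$ distributed as $q(\cdot|\tilde s_i)$ for arbitrary $\tilde s_i$.
\item The main obstacle is that these pairs are not i.i.d.: the jammer varies $\tilde s$, so $\tilde C=\bar C$ from \cite{ahlswede-cai-correlated} cannot be invoked directly. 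To handle this, I would use the pairs as a one-time-pad-like modulation and pass to the augmented channel \eqref{def:augmented-avc}. I would show that every $\tilde w_p(\cdot|s,\tilde s,\cdot)$ is a \gls{bsc} with crossover at most $\tfrac12-\delta$. Concretely, $\tilde w_p$ is a mixture of $w_p$ shifted by $u\oplus v$, whose crossover is $t(1-\epsilon)+(1-t)\epsilon$ with $\epsilon=q(u\ne v|\tilde s)<\tfrac12$ (from $\rho_{\mathrm{bin}}>0$) and $t\in\{p,\tilde p,\tfrac12\}$. I would also check that the jammer can no longer symmetrize, because his symbol cannot track the secret shift $u$.
\item With this non-symmetrizable augmented \gls{avc}, I would run a short deterministic code in the second $k/2$ rounds to agree on $\Theta(\log n)$ bits of common randomness. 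Polynomially many shared values suffice by Ahlswede's elimination technique \cite{elimination}.
\item The remaining $n-k$ rounds then use a common-randomness-assisted code at any rate below $\bar C$.
\end{enumerate}
The error probabilities of the phases add, and the rate loss $k/n\to0$, so $C\ge\bar C$; $C\le\bar C$ is trivial.

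\emph{Third claim: $\bar C>0$ for all $\alpha$.} I would take the uniform input on $\{0,1\}$ and note that $\bar C\ge\max_P\min_{Q\in\mathcal P(\mathbf S)}I(P,\sum_s Q(s)w(\cdot|s,\cdot))$. For any mixed jammer strategy the averaged channel restricted to $x\in\{0,1\}$ is a \gls{bsc} with crossover $Q(0)\,c_0+Q(1)\,c_1+Q(2)\tilde p$. Here, for $x=0$, the crossover is $p$ under $s=0$ and $\tfrac12$ under $s=1$; for $x=1$ these roles are swapped. These two crossovers are not the same, so I would instead directly compute the worst-case mutual information as a continuous function on the compact simplex, and show it vanishes only if the two output distributions coincide. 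That would require $Q(0)(1-2p)+Q(2)(1-2\tilde p)=-[Q(1)(1-2p)+Q(2)(1-2\tilde p)]$, which is impossible since both sides have fixed, opposite signs and are not both zero. Positivity on the compact set then gives a positive minimum, using $p<\tilde p<\tfrac12$ for every $\alpha>0$.
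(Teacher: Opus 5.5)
Your first claim (checking $w(y|s,x)=w(y|x,s)$ case by case) matches the paper. Your proof of $\bar C>0$ is also sound. There you use continuity on the compact simplex and the sign of $\omega(0|0)-\omega(0|1)=(Q(0)+Q(1))(\tfrac12-p)+Q(2)(1-2\tilde p)>0$, whereas the paper uses Pinsker's inequality to get an explicit bound; the difference is inessential.

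The genuine gap is step~2 of the second claim, which carries the whole theorem. Your crossover $t(1-\epsilon)+(1-t)\epsilon$ presupposes that, for fixed $s$, $x\mapsto w_p(\cdot|s,x)$ is a BSC, so that $w_p(y\oplus v|s,x\oplus u)$ depends on $(u,v)$ only through $u\oplus v$. That holds for $s=2$ but not for $s\in\{0,1\}$. There the crossover is $p$ or $\tfrac12$ according to whether $x\oplus u=s$, which is decided by $u$ alone. Read literally, your formula gives exactly $\tfrac12$ when $t=\tfrac12$, so it cannot yield $\tfrac12-\delta$.

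The missing ingredient, which the paper uses, is that the sender's homodyne marginal is untouched by the jammer. Hence $U$ is exactly uniform, and $x\oplus u$ hits $s$ with probability $\tfrac12$ whatever the jammer does. Together with the uniform positivity of the correlation, this puts every $q(\cdot|\tilde s)$ in $\Delta_\delta$, i.e.\ $q=\lambda_cq_c+\lambda_0q_0+\lambda_1q_1$ with $\lambda_c\ge\delta$. Inserting this into \eqref{def:augmented-avc} gives, for $s,x\in\{0,1\}$,
\[
\tilde w(0|s,\tilde s,x)=\tfrac12+\tfrac{1-2p}{4}\left(\lambda_c(-1)^x+(\lambda_0-\lambda_1)(-1)^s\right).
\]

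This still does not finish the argument, and the paper's own proof has the same hole at exactly this point. The paper asserts that the vertices $q_0,q_1$ produce $b_{1/2}$, so that $\tilde w$ is a BSC independent of $s$. In fact $q_0$ produces the input-independent but $s$-dependent, non-uniform output $\tfrac12 b_p(\cdot|s)+\tfrac14$. By the display, $\tilde w$ is a BSC (with crossover $\tfrac12-\tfrac{\lambda_c(1-2p)}{4}$, as in the paper) only when $\lambda_0=\lambda_1$, i.e.\ for undisplaced jamming states.

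Otherwise, fix $\tilde s$ and let the jammer randomize only over $s\in\{0,1\}$. The convex hulls of $\{\tilde w(0|s,\tilde s,1)\}_s$ and $\{\tilde w(0|s,\tilde s,0)\}_s$ then intersect as soon as $|\lambda_0-\lambda_1|\ge\lambda_c$. In that case the binary-input augmented channel is symmetrizable: the jammer biases $V$ at least as much as $V$ is correlated with $U$.

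A coherent jamming state $|\pm\alpha\rangle$ sent during the correlation phase achieves this. Take $\sinh^2 r=\alpha^2$, $\eta=\tfrac12$ and $\alpha=1$, so $\rho=\sqrt{2/3}$ and $|b|=1$. Then \eqref{eqn:structure-of-q-1}--\eqref{eqn:structure-of-q-4} give roughly $\lambda_0\approx0.69$, $\lambda_1\approx0.01$, $\lambda_c\approx0.30$.

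So your remark that ``the jammer can no longer symmetrize'' is not a routine check. It is precisely the step that fails for the one-time-pad protocol as stated. To get $C=\bar C$ you would have to exclude displaced jamming during the correlation phase, or use the pairs $(u_i,v_i)$ in a way that is robust to a jammer-induced bias of $V$.
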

    \begin{remark}
        The central observation in Theorem \ref{thm:main} is that a specific \emph{physical} channel, jammer and decoder setup yields positive capacity if the transmitter can create, transmit and access a specific entangled state, while the exact same setup has zero capacity when the transmitter can create and transmit the marginal of the \gls{tmsv} state.
    \end{remark}
    \begin{remark}
        Our result provides an improvement upon the work \cite{ahlswede-cai-correlated} by showing that, in the particular case at hand, already the \emph{arbitrarily varying} source created by the correlation procedure suffices to achieve $\bar C$.
    \end{remark}
    \begin{remark}
        Our result may be expressed as one of super-activation: The channel $\{2\}\to S_{\alpha^2}$ followed by homodyne detection clearly has $C=0$, even when it gets modified such that instead of $S_{\alpha^2}$ the sender creates a \gls{tmsv} state, sends one half to the receiver while keeping (and measuring) the other half (see e.g. \cite{nonlocality}). Just as well, the channel from \eqref{def:bpsk-avc} with $x\in\{0,1\}$ has $C=0$. If used together however, then $C>0$. This is reminiscent of the super-activation phenomenon for quantum channels discovered in \cite{Smith08}, where two channels with zero quantum capacity combine to yield positive capacity.
    \end{remark}
    To prove Theorem \ref{thm:main}, we first note down some preliminary statements. The first statement ensures that the correlations created by the correlation distribution procedure of Section \ref{sec:procedure} let any action of the jammer effectively be modeled as a \gls{bsc} between sender and receiver, where the jammer chooses the parameter of the \gls{bsc}. As a first step on this way, we show a geometric property of the set of correlations.
\subsubsection*{Geometric Properties}
    Given a fixed but arbitrary $\alpha$, we define $\qa$ to be the set of all distributions of $(U,V)$ as defined in \eqref{def:binary-correlations} and $\qa(\alpha)$ the subset of $\qa$ for which the jammer has jamming energy per symbol $\alpha^2$. We define further the distributions $q_c$, $q_0$ and $q_1$ as
    \begin{align}
        q_c(u,v)&:=\tfrac{1}{2}\delta(u,v)\\
        q_0&:=\pi\otimes\delta_0\\
        q_1&:=\pi\otimes\delta_1,
    \end{align}
    the set $\Delta:=\mathrm{conv}(\{q_c,q_0,q_1\})$ and the set $\Delta_\delta:=\mathrm{conv}(\{q_c,\tilde q_0,\tilde q_1\})$ with $\tilde q_i:=(1-\delta)q_i+\delta q_c$. The energy limit $\alpha^2$ of the jammer bounds the displacement and the covariance entries of the jamming state, so that for every $\alpha\neq0$ there is a $\delta>0$ such that $\qa(\alpha)\subset\Delta_\delta$. 
    \begin{proof}
        We first show $\qa\subset\Delta$. Each member $q$ of $\qa$ is given by the formulas in \eqref{eqn:structure-of-q-1} to \eqref{eqn:structure-of-q-4}.
        We embed $\mathcal P(\mathbf X)$ and thereby $\Delta$ into $\mathbb R^3$ by mapping any distribution $r$ to $r(0,0)e_0+r(0,1)e_1+r(1,0)e_2$. A normal to $\Delta$ is then given by  $(1,1,0)$. 
        The equality $\langle q-q_c,n\rangle =0$ shows that each $q\in\qa$ is also in the affine hull of $\Delta$. With \eqref{eqn:structure-of-q-1}-\eqref{eqn:structure-of-q-4} and $\rho>0$ we have $\lambda_c, \lambda_0, \lambda_1 \geq 0$, and therefore $\qa\subset\Delta$.
    \end{proof}
\subsubsection*{Effective Channel}
    Consider $w_0$ as in \eqref{def:bpsk-avc} with $p=0$ and $s,x\neq2$: $w_0(y|s,x)=\delta(x,y)$ if $s=x$, $w_0(y|s,x)=\tfrac{1}{2}$ ($s\neq x$). Then $w_0$ is symmetrizable with the choice $u(s|x)=\delta(s,x\oplus1)$. Will the use of a correlated random variable under partial control of the jammer still enable the sender and receiver to communicate? For $q\in\mathcal P(\{0,1\}^2)$, set
    \begin{align}
        \tilde w(y|q,s,x):=\sum_{u,v}w_0(y\oplus v|s,x\oplus u)q(u,v)
    \end{align}
    as in \eqref{def:augmented-avc}. If no restrictions on $q$ are imposed, the jammer would set $q(u,v)=\tfrac{1}{4}$ and the channel $\tilde w$ would be equal to $b_{1/2}$. By introducing the restriction $q\in\Delta_\delta$ for some $\delta>0$, we can however limit the impact of the jammer and show the property
    \begin{align}\label{eqn:bsc-property}
        \forall q\in\Delta_\delta\ \exists t\in[0,\tfrac{1}{2}-\tfrac{\delta}{4}]:\ \ \tilde w = b_t,
    \end{align}
    thus demonstrating that data transmission over $\tilde w$ is possible. 
    \begin{proof}
        We first show \eqref{eqn:bsc-property} for $\delta=0$. If $q=q_c$, then
        \begin{align}
            \tilde w_0(y|q_c,s,x)
                &= \tfrac{1}{2}\left(\tfrac{1}{2}+\delta(x,y)\right),
        \end{align}
        showing that in this case $\tilde w=b_{1/4}$. If $q=q_0$ or $q=q_1$, the randomization over $u$ and $v$ becomes \emph{independent} between sender and receiver, thus $\tilde w=b_{1/2}$. The map $q\to w(\cdot|q,\cdot,\cdot)$ is linear, thus every $\tilde w$ generated from $q\in\Delta$ is in $\mathrm{conv}(\{b_{1/4}, b_{1/2}\})$. Since the map $t\to b_t$ is linear itself, we get
        \begin{align}\label{eqn:w0-bsc}
            \tilde w_0(y|\lambda_cq_c+\textstyle\sum_{i=0}^1\lambda_iq_i,s,x) = b_{\lambda_c/4+(\lambda_0+\lambda_1)/2}(y|x).
        \end{align}
        Most importantly, $\tilde w(\cdot|q,s,\cdot)$ becomes independent from $s$ as long as $q\in\Delta$!
    \end{proof}
    From the above study it is immediate that $q\in\Delta_\delta$ implies $\tilde w=b_\xi$ for $\xi=\tfrac{1}{4}\lambda_c\delta +(1-\lambda_c)\left( \tfrac{1-\delta}{2} + \tfrac{\delta}{4} \right)$ where $\lambda_c\in[0,1]$, and that therefore 
    \begin{align}\label{eqn:lower-bound-on-t}
        t\leq\tfrac{1}{2}- \tfrac{\delta}{4} =:t_{s\neq2}.
    \end{align}
    For the case $s=2$, $\tilde w=b_t$ for some $t$ follows directly from \eqref{def:bpsk-avc}. Similar arguments as above show that in this case $t\leq\delta\tilde p + (1-\delta)/2=:t_{s=2}$. 
    Since $\delta>0$ and $\tilde p<1/2$ imply $\max\{t_{s=2},t_{s\neq2}\}<1/2$ we get $\tilde C(\tilde w)>0$ as soon as the restriction $q\in\Delta_\delta$ applies by invoking the code construction of \cite[Theorem 1]{ahlswedeWolfowitz}. For the reader's convenience, figure \ref{fig:delta-bow} visualizes $\qa$ and $\mathrm{conv}(\{q_c,q_0,q_1\})$ as part of the full probability space.
    \begin{figure}
        \includegraphics[trim={0cm 2.5cm 0 0cm},clip,width=.45\textwidth]{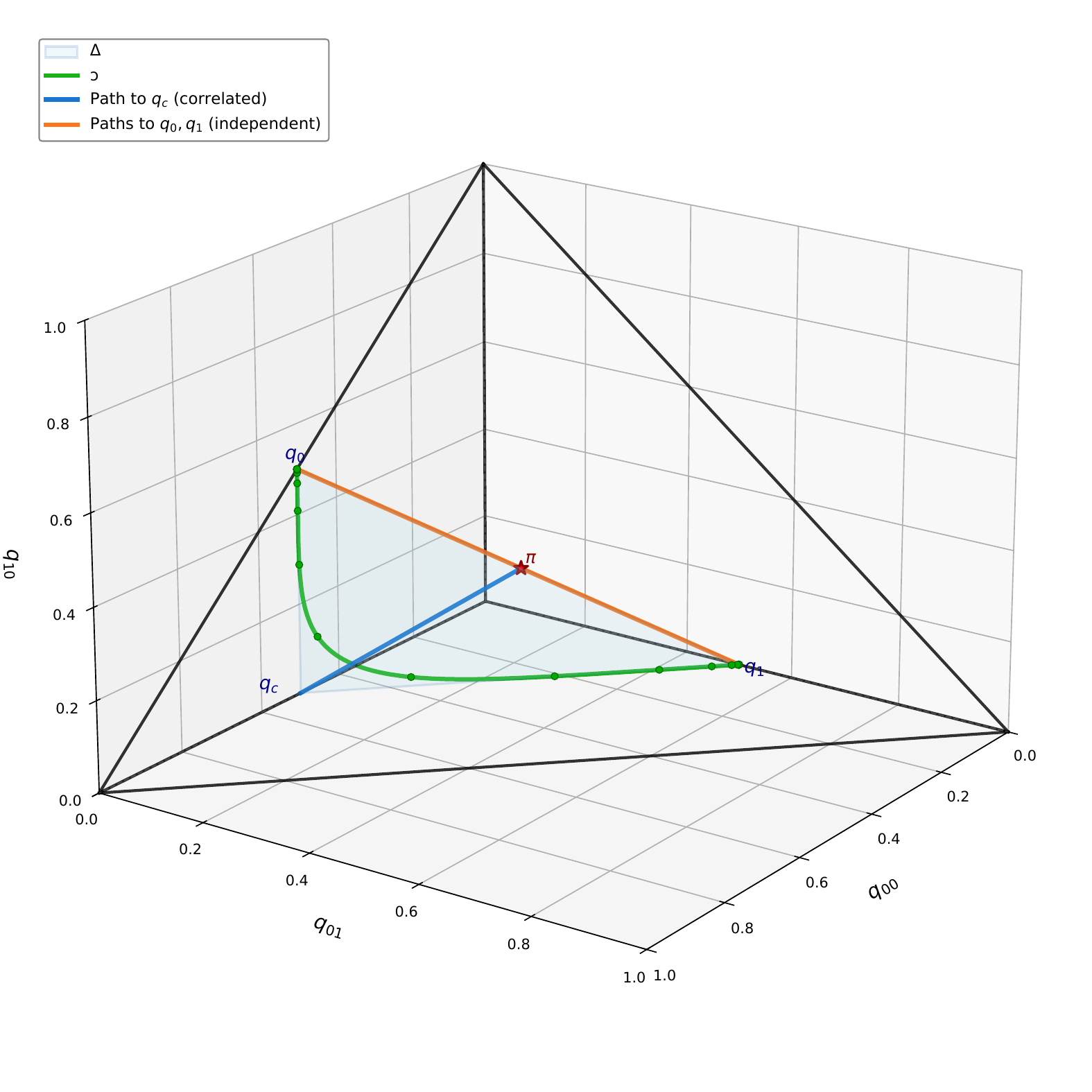}
        \caption{Depicted are $\qa$ and $\Delta$ as part of the convex polytope $\mathcal P(\{0,1\}^2)$, using $r=A=2$. We observe that $\qa$ approaches $q_0$ and $q_1$ in its limits.}\vspace{-3mm}
        \label{fig:delta-bow}
    \end{figure}
\begin{proof}[Proof of Theorem \ref{thm:main}]
    In order to fully capture the model \eqref{def:bpsk-avc} and finally prove the main theorem we first relax our earlier assumption $p=0$:
    Consider now $w$ instead of $w_0$, but keep $s,x\neq2$. We show
    \begin{align}\label{eqn:inside-bsc-family}
        w = b_{p}\circ w_0.
    \end{align}
    Let $s=x$. By \eqref{def:bpsk-avc} it holds $ w_p(y|x,x)=b_{p}(y|x)$. Let $s\neq x$. Then it holds 
    $w(y|x,x\oplus1)=b_{1/2}(y|x)$.\\
    Since $b_{t_1}\circ b_{t_2}=b_{t_1+t_2-2t_1t_2}$ and therefore $b_p\circ b_{1/2}=b_{1/2}$, statement \eqref{eqn:inside-bsc-family}
    is proven.\\
    By our previous discussion (see eq. \eqref{eqn:w0-bsc}), $\tilde w_0$ has the property $\tilde w_0=b_t$. Thus for every $q\in\Delta$
    \begin{align}
        \tilde w(y|s,x) 
            &= b_{p+t - 2pt}(y|x).
    \end{align}
    Since by \eqref{eqn:lower-bound-on-t} we have $t\leq(1-\delta)/2$ for $q\in\Delta_\delta$ and for every \gls{bsc} we have $\tilde C\geq \max_r I(r;b_t)=I(\pi,b_t)=1-h(t)$ where $h$ is the binary entropy, the result follows from \cite[Theorem 1]{ahlswedeWolfowitz} and by noting that the choice $s=2$ generates a \gls{bsc} as well:
    Sender and receiver first generate correlation for $\sim\log n$ transmissions using only $x=2$. Then they generate common randomness e.g. at a rate $\tilde C/2>0$ for another $\sim\log n$ transmissions using only $x\in\{0,1\}$, followed by data transmission at rates arbitrarily close to $\bar C$. The last two steps follow procedure established in \cite{ahlswede-cai-correlated} and used in \cite{boche13}. The capacity $\bar C$ is bounded away from zero as follows: Upon choosing a uniform input distribution $\pi$ and Pinsker's inequality, we have for every channel $\omega$ from $\{0,1\}$ to $\{0,1\}$ with joint distribution $\omega_\pi(y|x)=\frac{1}{2}\omega(y|x)$ the inequality
    \begin{align}
    	I(\pi,\omega)&\ge\tfrac{1}{2}\parallel\omega_\pi-\pi\otimes\omega(\pi)\parallel_1^2\\
    	&=\tfrac{1}{2}|\omega(0|0) - \omega(0|1)|^2.
    \end{align}
    For our distribution $\omega(y|x) = \sum_s\lambda_s w(y|s,x)$, so that 
    \begin{align}
    	I(\pi,\omega)\geq\tfrac{1}{2}\min\left\{\left(p-\tfrac{1}{2} \right) ^2, \left( 2\tilde p-1 \right)^2 \right\} > 0
    \end{align}
    by definition of $p$ and $\tilde p$.
\end{proof}


\appendix
Let $X, Y$ be random variables that are distributed according to a non-degenerate bivariate Gaussian distribution
\begin{align}
\begin{pmatrix}
X \\ Y
\end{pmatrix} \sim \, \mathcal{N}\left( \begin{pmatrix}
\mu_{X} \\ \mu_{Y}
\end{pmatrix}, \begin{pmatrix}
\sigma_{X}^{2} & \rho \sigma_{X} \sigma_{Y} \\ \rho \sigma_{X} \sigma_{Y}  &  \sigma_{Y}^{2}
\end{pmatrix} \right)  
\end{align}
with correlation coefficient $\rho$. Now consider the sign binarized variables $U = \mathrm{sgn}(X)$ and $V = \mathrm{sgn}(Y)$. We call their correlation coefficient
\begin{align}
\rho_\mathrm{bin} = \frac{\mathrm{Cov}(U, V)}{\sqrt{ \mathrm{Var(U)}\mathrm{Var}(V) }}.
\end{align}
We show that $\rho>0$ implies $\rho_{bin}>0$. 

Start by switching to standardized variables
\begin{align}
\bar X = \frac{X-\mu_{X}}{\sigma_{X}}, \qquad \bar Y = \frac{Y-\mu_{Y}}{\sigma_{Y}},
\end{align}
so that
\begin{align}
\begin{pmatrix}
\bar X \\ \bar Y
\end{pmatrix} \sim \, \mathcal{N} \left( \begin{pmatrix}
0 \\0
\end{pmatrix}, \begin{pmatrix}
1 & \rho \\
\rho & 1
\end{pmatrix} \right).
\end{align}
We also define $t_{X} = -\frac{\mu_{X}}{\sigma_{X}}$ and $t_{Y} = -\frac{\mu_{Y}}{\sigma_{Y}}$, so that $U=\mathrm{sgn}(\bar X-t_{X})$ and $V=\mathrm{sgn}(\bar Y-t_{Y})$.
Let $\varphi_{2}$ denote the probability density of the standard bivariate normal distribution, and let $\Phi$ and $\Phi_{2}$ be the cumulative distribution functions of the univariate and bivariate normal distributions. The sign of $\rho_{\mathrm{bin}}$ is determined by $\mathrm{Cov}(U,V)$, as the denominator is always positive. It can be expressed as
\begin{align}
\tfrac{1}{4} \mathrm{Cov}(U,V) = p_{++}-p_{+}^{U}p_{+}^{V},
\end{align}
where $p_{++}=\mathbb{P}(U=+1, V=+1)$, $p_{+}^{U}=\mathbb{P}(U=+1)$, $p_{+}^{V}=\mathbb{P}(V=+1)$. Explicitly we have
\begin{align}
p_{++}-p_{+}^{U}p_{+}^{V} = &  1-\Phi(t_{X})-\Phi(t_{Y})+\Phi_{2}(t_{X},t_{Y};\rho)\\&-(1-\Phi(t_{X}))(1-\Phi(t_{Y})) \nonumber \\
= & \Phi_{2}(t_{X},t_{Y};\rho) - \Phi(t_{X})\Phi(t_{Y}).
\end{align}
By definition, this expression is equal to zero for $\rho=0$. The derivative of $\Phi_{2}$ with respect to $\rho$ (Plackett's formula \cite{plackett1954}) is
\begin{align}
\frac{\partial}{\partial \rho}\Phi_{2}(t_{X},t_{Y};\rho) = \varphi_{2}(t_{X},t_{Y};\rho),
\end{align}
which for $\left| \rho \right|<1$ is strictly positive on $\mathbb{R}^{2}$. Therefore $\Phi_{2}(t_{X},t_{Y};\rho)$ and in effect $\mathrm{Cov}(U,V)$ is strictly increasing in $\rho$, implying
\begin{align}
\rho>0 \quad \Longrightarrow \quad  \rho_{\mathrm{bin}}>0.
\end{align}

\bibliographystyle{IEEEtran}
\bibliography{arXiv-bib}

\end{document}